\documentclass[journal]{IEEEtran}
\usepackage[latin1]{inputenc}
\usepackage{amsmath}
\usepackage{graphicx}
\usepackage{amssymb}
\usepackage{xcolor}
\makeatletter

\newtheorem{lemma}{Lemma}

\newtheorem{thm}{Theorem}

\providecommand{\LyX}{L\kern-.1667em\lower.25em\hbox{Y}\kern-.125emX\@}

\def\BibTeX{{\rm B\kern-.05em{\sc i\kern-.025em b}\kern-.08em
    T\kern-.1667em\lower.7ex\hbox{E}\kern-.125emX}}

\makeatother
\begin{document}
\def\ZZ{{\mathbb Z}}
\def\RR{{\Bbb R}}
\def\NN{{\mathbb N}}
\def\CC{{\mathbb C}}

\title{Observer-Based Target Control for Mismatched Time-Delay Systems}
	\author{Hieu Trinh
	%\IEEEmembership{Senior Member, IEEE}
	%\thanks{This paper was submitted to IEEE Transactions on Automatic Control for review on 15 January 2026.}
	\thanks{Hieu Trinh is with the School of Engineering, Deakin University, Waurn Ponds, 75 Pigdons Road, Geelong, Australia. (email:  hieu.trinh@deakin.edu.au)}
	}

\maketitle \maketitle \maketitle \thispagestyle{plain}
\pagestyle{plain}

\begin{abstract}
This paper addresses observer-based target control for linear time-delay systems subject to simultaneous, mismatched input and output latencies. While full-state regulation is often conservative and computationally intensive, practical engineering objectives typically require controlling only specific linear combinations of states, or target outputs. To overcome the challenges posed by these asymmetric, dual-channel delays, we propose a reduced-order modeling framework inspired by the structural philosophy of Fernando and Darouach \cite{Fernando2025}. By projecting the high-dimensional plant dynamics onto the row space of the target output matrix $F_o$, the controller focuses strictly on the lower-dimensional target subspace. Based on this projection, an observer-based control scheme is developed to ensure precise target stabilization despite the simultaneous, mismatched input, state, and output latencies.

\end{abstract}

\begin{keywords}
Time-delay compensators, delayed measurements,
input delays, functional observers, target output controllers.
\end{keywords}

\section{System Description and Problem Statement}
Consider the following time-delay system:
\begin{align}
	\label{d1}
	\dot{x}(t)&=Ax(t)+A_dx(t-\tau_x)+Bu(t-\tau_u),\\
	\label{d2}
	y(t)& = Cx(t-\tau_y),
\end{align}
where $x(t)\in \mathbb{R}^n$ is the state vector, $u(t)\in \mathbb{R}^r$ is the control input vector, and $y(t)\in\mathbb{R}^{p}$ is the measured output vector. The constants $\tau_x>0$, $\tau_u>0$ and $\tau_y>0$ represent the time delays in the state, control input, and output channels, respectively. The initial condition is defined by the function $x(t) = \rho(t)$ for $t \in [-\tau_{\max}, 0]$, where $\tau_{\max} = \max\{\tau_x, \tau_y\}$. The system matrices $A,A_d\in\mathbb{R}^{n\times n}$, $B\in\mathbb{R}^{n\times r}$, and $C\in\mathbb{R}^{p\times n}$ are constant. Without loss of generality, it is assumed that $B$ has full column rank and $C$ has full row rank. We do not assume that $A$ is a Hurwitz matrix. In fact, even if $A$ is Hurwitz, the overall system \eqref{d1} may still be destabilized by the presence of the delayed term $A_d x(t-\tau_x)$. 

The target output vector to be regulated is defined as:
\begin{equation}\label{d3}
	z_o(t)=F_ox(t),
\end{equation}
where $z_o(t)\in \mathbb{R}^{m}$ and $F_o\in \mathbb{R}^{m\times n}$ is a constant matrix. We assume $F_o$ has full row rank ($m < n$) because the linear functions targeted for control are linearly independent; ensuring the regulation of these independent components inherently guarantees the control of any linearly dependent variations.

The primary objective of this paper is to design an observer-based controller that asymptotically drives the target output vector to the origin, i.e., $\lim_{t \to \infty} z_o(t) = \mathbf{0}$, from any initial condition.

A key contribution of this work is the development of a reduced-order observer-based design, achieved by projecting the full state dynamics directly onto the row space of the target output matrix $F_o$ \cite{Fernando2025}. While traditional approaches burden the observer with reconstructing the entire high-dimensional state vector $x(t)$, the proposed projection method strategically isolates the lower-dimensional subspace that actively dictates the target tracking profile. Specifically, to stabilize the target output under severe latencies, we introduce a delay-compensated control strategy that regulates the linear combination $F_ox(t)$ rather than the full state vector. As noted in \cite{Fernando2025}, the asymptotic stability of the target output does not inherently require the asymptotic stability of the entire state vector $x(t)$. This distinction highlights the practical advantage of targeting a specific subspace, avoiding the unnecessary conservative constraints of full-state stabilization.

To handle the time delay in the input channel effectively and render the synthesis problem tractable, we consider the case where $\tau_x>\tau_u$ (the case where $\tau_u > \tau_x$ is discussed in Remark 2). The proposed control law is structured as follows:
\begin{align}
	\label{d4}u(t) = Z_{\tau_u}z_o(t) + Z_{\tau_x}z_o(t - \tau_x + \tau_u), 
\end{align}
which, when subject to the input delay $\tau_u$, yields the delayed control input:
\begin{align}
	\label{d5}
	u(t-\tau_u) = Z_{\tau_u}z_o(t-\tau_u) + Z_{\tau_x}z_o(t-\tau_x) \end{align}
where $Z_{\tau_u}, Z_{\tau_x} \in \mathbb{R}^{r \times m}$ are the control gains to be designed such that the target output vector $z_o(t)\to \bf 0$ asymptotically. Note that when $\tau_u = \tau_x$, the control law naturally reduces to $u(t-\tau_u) = Z_{\tau_u}z_o(t-\tau_u)$.

Importantly, the control law is applied consistently to both the full-order and reduced-order models. To ensure a valid framework, the spectral properties of the projected closed-loop system must be preserved within those of the full-order system. This guarantees that any stability or instability observed in the reduced-order system directly corresponds to that of the full system. We will establish this relationship in the subsequent sections of this paper.

In practice, however, neither $z_o(t)$ nor its delayed counterpart $z_o(t-\tau_u)$ is directly available for feedback. Furthermore, the measured output vector is subject to the time delay modeled in (\ref{d2}). To overcome these limitations, the designed controller is implemented using a functional observer, the details of which will be discussed in the subsequent sections of this paper.

The remainder of this paper is organized as follows. Section II presents the essential lemmas required to derive the reduced-order time-delay system, which is obtained by projecting the full-order state dynamics onto the row space of the target output matrix $F_o$. By capturing the evolution of the target output variables, this projected system serves as a lower-dimensional representation for assessing stabilizability. Consequently, the target output controller is designed by stabilizing this lower-order subsystem. Section III details the design of a functional dual-observer architecture to estimate the delayed control law (\ref{d5}). This scheme extends the framework developed in \cite{trinh5} to the class of time-delay systems (\ref{d1})-(\ref{d2}) characterized by mismatched latencies.

\section{Target Output Control via Lower-Order Time-Delay Subsystems}

We will use the following lemmas \cite{Fernando2025} in the sequel.
\begin{lemma}[\cite{Fernando2025}]\label{lemma1p5}
	The following statements are equivalent:
	\begin{itemize}
		\item[i)] $\mathrm{rank}\begin{pmatrix}
			F_oA\\F_o
		\end{pmatrix} =\mathrm{rank}(F_o)$,
		\item[ii)] $	F_oA(I-F^{-}_oF_o) = \mathbf{0}.$  
	\end{itemize}
\end{lemma}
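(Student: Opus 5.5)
The plan is to read both conditions as statements about row spaces and null spaces, using only the defining property of the generalized inverse $F_o^{-}$, namely $F_oF_o^{-}F_o=F_o$. Since $F_o$ has full row rank one may take $F_o^{-}=F_o^T(F_oF_o^T)^{-1}$, so that $F_oF_o^{-}=I_m$. I will, however, only use $F_oF_o^{-}F_o=F_o$, so the argument works for any choice of generalized inverse. The key preliminary observation is that $P:=I-F_o^{-}F_o$ satisfies $F_oP=F_o-F_oF_o^{-}F_o=\mathbf{0}$. Moreover, for any $v\in\ker F_o$ we have $Pv=v$. Hence the range of $P$ is exactly $\ker F_o$.

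For i) $\Rightarrow$ ii), I will argue as follows. The rank equality says that adjoining the rows of $F_oA$ to $F_o$ does not enlarge the row space. So every row of $F_oA$ lies in the row space of $F_o$, and there exists a matrix $L\in\mathbb{R}^{m\times m}$ with $F_oA=LF_o$. Then
\begin{equation*}
F_oA(I-F_o^{-}F_o)=LF_o(I-F_o^{-}F_o)=L(F_o-F_oF_o^{-}F_o)=\mathbf{0}.
\end{equation*}

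For ii) $\Rightarrow$ i), I will expand ii) as $F_oA=F_oAF_o^{-}F_o$. This exhibits $F_oA=LF_o$ with $L:=F_oAF_o^{-}$. Hence each row of $F_oA$ is a linear combination of the rows of $F_o$. Stacking gives
\begin{equation*}
\begin{pmatrix}F_oA\\F_o\end{pmatrix}=\begin{pmatrix}L\\I\end{pmatrix}F_o,
\end{equation*}
so the rank of the stacked matrix is at most $\mathrm{rank}(F_o)$. It is also at least $\mathrm{rank}(F_o)$, since $F_o$ is a submatrix. Equality follows.

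There is no real obstacle here. The only point needing care is the first step: turning the rank equality into the factorization $F_oA=LF_o$. I would justify it either by the row-space argument above or, equivalently, by noting that $\ker F_o\subseteq\ker(F_oA)$. That inclusion holds because both the stacked matrix and $F_o$ have $n$ columns and equal rank, so their null spaces have the same dimension. The null space of the stacked matrix is contained in $\ker F_o$, so the two null spaces coincide. Combined with the fact that the range of $P$ equals $\ker F_o$, this gives ii) directly.
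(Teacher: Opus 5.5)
Your proof is correct and complete. The paper gives no proof of Lemma~\ref{lemma1p5}; it is quoted from Fernando and Darouach \cite{Fernando2025}, so there is no in-paper argument to compare against. What you give is the standard row-space argument, and both directions go through as written.

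Two features of your version are worth noting. First, you use only $F_oF_o^{-}F_o=F_o$, never $F_oF_o^{-}=I_m$. So the equivalence holds for every generalized inverse and does not actually require $F_o$ to have full row rank.

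Second, the factor $L=F_oAF_o^{-}$ you exhibit in ii) $\Rightarrow$ i) is exactly the $N_o$ of Lemma~\ref{lemma2p5}. Your argument therefore also yields the solvability part of that lemma. With $A$ replaced by $A_d$, it gives the identities \eqref{d8}--\eqref{d9} and \eqref{d14}--\eqref{d15} that the paper relies on.

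One small economy: the fact that the range of $P$ equals $\ker F_o$ is needed only for your alternative null-space route. The main i) $\Rightarrow$ ii) step uses just $F_oP=\mathbf{0}$.
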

\begin{lemma}[\cite{Fernando2025}]\label{lemma2p5}
	The following equation
	\begin{equation} 
		N_oF_o-F_oA=\bf 0, \nonumber 
	\end{equation} 
	where $A\in \mathbb{R}^{n\times n}$, $F_o\in \mathbb{R}^{m\times n}$, $\mathrm{rank}(F_o)=m$ and $m \leq n$ are known matrices and 
	$N_o \in \mathbb{R}^{m\times m}$ is an unknown matrix, has a solution
	if and only if
	\begin{equation} 
		\mathrm{rank}\begin{pmatrix}
			F_oA\\F_o
		\end{pmatrix} =\mathrm{rank}(F_o), \nonumber 
	\end{equation}
	and in this situation $N_o=F_oAF^-_o$ satisfies
	\begin{align}
		\sigma(N_o) \subseteq \sigma(A), \nonumber 
	\end{align}
where $\sigma(N_o)$ denotes the spectrum (the set of all eigenvalues) of the matrix $N_o$.\end{lemma}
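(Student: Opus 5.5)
The proof of Lemma~\ref{lemma2p5} splits into three parts: necessity of the rank condition, sufficiency together with the explicit solution $N_o=F_oAF_o^-$, and the spectral inclusion. Throughout, $F_o^-$ denotes a generalized inverse with $F_oF_o^-F_o=F_o$. Since $F_o$ has full row rank $m$, we in fact have $F_oF_o^-=I_m$; for example $F_o^-=F_o^T(F_oF_o^T)^{-1}$ works.

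\emph{Necessity.} Suppose $N_oF_o=F_oA$ for some $N_o$. Then every row of $F_oA$ is a linear combination of the rows of $F_o$. Hence $\mathrm{rank}\begin{pmatrix}F_oA\\F_o\end{pmatrix}=\mathrm{rank}(F_o)$.

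\emph{Sufficiency.} Assume the rank condition holds. By Lemma~\ref{lemma1p5} it is equivalent to $F_oA(I-F_o^-F_o)=\mathbf{0}$, that is, $F_oA=F_oAF_o^-F_o$. Setting $N_o:=F_oAF_o^-$ therefore gives $N_oF_o=F_oAF_o^-F_o=F_oA$, so this $N_o$ solves the equation.

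\emph{Spectral inclusion.} This is the main point, and I would argue it via left eigenvectors. Let $\lambda\in\sigma(N_o)$, and pick a nonzero $w\in\mathbb{C}^m$ with $w^*N_o=\lambda w^*$. Multiplying $N_oF_o=F_oA$ on the left by $w^*$ gives
\begin{equation}
(w^*F_o)A=w^*N_oF_o=\lambda\,(w^*F_o). \nonumber
\end{equation}
Because $F_o$ has full row rank, $v^*:=w^*F_o\neq 0$. So $v$ is a left eigenvector of $A$ with eigenvalue $\lambda$, which gives $\lambda\in\sigma(A)$.

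The one step needing care is the nonvanishing of $w^*F_o$. This is exactly where $\mathrm{rank}(F_o)=m$ is used, since the rows of $F_o$ are independent over $\mathbb{C}$ as well.

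An alternative, which also yields the multiplicity statement (the characteristic polynomial of $N_o$ divides that of $A$), is a similarity transform. Complete $F_o$ to a nonsingular $T=\begin{pmatrix}F_o\\R\end{pmatrix}$. Then $TAT^{-1}$ is block lower triangular with $(1,1)$ block $N_o$, because $F_oAT^{-1}=N_oF_oT^{-1}=(N_o\;\;\mathbf{0})$.

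I expect no serious obstacle. The only subtlety is confirming that the chosen generalized inverse satisfies $F_oF_o^-F_o=F_o$, so that Lemma~\ref{lemma1p5} applies to the same $F_o^-$ appearing in $N_o$.
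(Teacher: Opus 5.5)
Your proof is correct and complete: necessity, sufficiency via $N_o=F_oAF_o^-$, and the left-eigenvector argument (with $w^*F_o\neq 0$ from full row rank) all go through. The paper gives no proof of this lemma; it is imported from Fernando and Darouach. So there is no in-paper argument to compare yours with, and the remarks below only concern what your two routes buy where the paper uses the result.

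Two small simplifications first. Since $F_oF_o^-=I_m$ for \emph{every} generalized inverse, right-multiplying any solution of $N_oF_o=F_oA$ by $F_o^-$ gives $N_o=F_oAF_o^-$. Hence the solution is unique and independent of the choice of $F_o^-$, which disposes of the ``subtlety'' you flag at the end. Also, sufficiency does not need Lemma~\ref{lemma1p5}: the rank condition says $F_oA=XF_o$ for some $X$, whence $F_oAF_o^-F_o=XF_o=F_oA$.

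More substantively, your similarity-transform variant matches how the paper actually applies the lemma after \eqref{d16}. There $\bar{N}_o$ and $\bar{A}$ are complex and depend on $\lambda$, and what is really needed is that every root of the reduced characteristic equation is a root of the full one. Since $T=\begin{pmatrix}F_o\\R\end{pmatrix}$ is constant, $T\bar{A}(\lambda)T^{-1}$ is block lower triangular with $(1,1)$ block $\bar{N}_o(\lambda)$ for all $\lambda$. This gives
\begin{equation*}
\det\bigl(\lambda I_n-\bar{A}(\lambda)\bigr)=\det\bigl(\lambda I_m-\bar{N}_o(\lambda)\bigr)\,\det\bigl(\lambda I_{n-m}-\bar{A}_{22}(\lambda)\bigr),
\end{equation*}
a factorization of the characteristic quasi-polynomials that accounts for multiplicities; this is what Example~1 displays. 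Your left-eigenvector argument also works here, but only pointwise: fix a root $\lambda_0$ and apply it to the constant complex pair $\bar{N}_o(\lambda_0)$, $\bar{A}(\lambda_0)$. That yields inclusion of root sets without multiplicity information.
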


Based on Lemma \ref{lemma1p5}, a reduced-order system is derived by projecting the full-order state dynamics onto the row space of the target output matrix $F_o$.

Pre-multiplying \eqref{d1} by $F_o$, we obtain
\begin{equation} 
	F_o\dot{x}(t) = F_oAx(t) + F_oA_dx(t -\tau_x) + F_oBu(t-\tau_u) \nonumber
\end{equation}	
which can be expressed as
\begin{align}
	\dot{z}_o(t)&= F_oA(I-F^-_oF_o+F^-_oF_o)x(t) \nonumber\\
	&+F_oA_d(I-F^-_oF_o+F^-_oF_o)x(t-\tau_x) +F_oBu(t-\tau_u) \nonumber\\
	&=N_oz_o(t) + N_{od}z_o(t-\tau_x) +B_ou(t-\tau_u)\nonumber\\&+F_oA(I-F^-_oF_o)x(t)+F_oA_d(I-F^-_oF_o)x(t-\tau_x), \label{d6}
\end{align}
where $N_o:=F_oAF^-_o$, $N_{od}:=F_oA_dF^-_o$, $B_o:=F_oB$, and $F_o^-$ is a generalized inverse of the matrix $F_o$ satisfying the condition $F_o F_o^- F_o = F_o$.

From Lemma \ref{lemma1p5}, the following condition
\begin{align}
	\label{d7}
	&\mathrm{rank}\begin{pmatrix}
		F_oA\\F_o
	\end{pmatrix} =\mathrm{rank}\begin{pmatrix}
	F_oA_d\\F_o
\end{pmatrix}=\mathrm{rank}(F_o)
\end{align}
is equivalent to the following two equations:
\begin{align}
	F_oA(I - F_o^-F_o)&= \mathbf{0}, \label{d8} \\
	F_oA_d(I - F_o^-F_o) &= \mathbf{0}. \label{d9}
\end{align}
Therefore, if \eqref{d7} holds, then from the projected dynamics in \eqref{d6}, we obtain the following reduced-order time-delay subsystem of dimension $m$:
\begin{align}
	\dot{z}_o(t)&=N_oz_o(t) + N_{od}z_o(t-\tau_x) +B_ou(t-\tau_u). \label{d10}
\end{align}

Substituting the control law (\ref{d5}) into (\ref{d10}) yields the closed-loop time-delay system
\begin{align}\label{d11}\dot{z}_o(t)= N_oz_o(t) +N_{\tau_u}z_o(t-\tau_u)+N_{\tau_x}z_o(t-\tau_x),\end{align}
where $N_{\tau_u}=B_oZ_{\tau_u}$ and $N_{\tau_x}=N_{od}+B_oZ_{\tau_x}$. The closed-loop time-delay system (\ref{d11}) is asymptotically stable if all complex roots $\lambda$ satisfying the characteristic equation
\begin{align}\label{d12}\det\Big(\lambda I - N_o - N_{\tau_u}e^{-\lambda \tau_u}- N_{\tau_x}e^{-\lambda \tau_x}\Big) = 0\end{align}lie in the open left-half complex plane (i.e., $\text{Re}(\lambda) < 0$). Under this condition, $z_o(t) \rightarrow \mathbf{0}$ as $t \rightarrow \infty$ for any given initial function, meaning the target output is successfully stabilized via a reduced subsystem of order $m$.

Note that for given delays $\tau_u$ and $\tau_x$, Lemma 13 in \cite{trinhnam26} can be employed to determine the gain matrices $Z_{\tau_u}$ and $Z_{\tau_x}$. Specifically, a sufficient condition for the closed-loop stability of (\ref{d11}) is guaranteed by the feasibility of the linear matrix inequality (LMI) formulated in that lemma.

Thus far, Lemma \ref{lemma1p5} has been utilized to derive the reduced-order time-delay system (\ref{d10}), and the closed-loop asymptotic stability of the reduced system (\ref{d11}) has been established based on the generalized delayed control law (\ref{d5}). Because this control law is ultimately applied to the actual full-order system, it is crucial that the spectral properties of the reduced-order closed-loop system are preserved within the full-order dynamics. Ensuring this spectral inclusion guarantees that the stability properties verified on the lower-dimensional surrogate directly govern the behavior of the full system, establishing a mathematically valid framework for controller design via the reduced-order dynamics. To formally characterize this relationship, the following development establishes the exact mapping between the eigenvalue spectra of the full- and reduced-order closed-loop systems.

To simplify the analysis, we consider the scenario where $\tau_u = \tau_x$ under the following control law:
\begin{align}
	\label{d13}u(t-\tau_x)=Z_{\tau_u}z_o(t-\tau_x).\end{align}
Applying the above control law to the full-order system (\ref{d1}) yields the following closed-loop dynamics:
$$\dot{x}(t)=Ax(t)+A_dx(t-\tau_x)+BZ_{\tau_u}F_ox(t-\tau_x),$$whose spectrum is given by$$\sigma\Big(A +A_de^{-\lambda \tau_x} + BZ_{\tau_u}F_oe^{-\lambda \tau_x}\Big).$$
Similarly, by applying the control law (\ref{d13}) to the reduced-order system (\ref{d10}) yields the following closed-loop dynamics:
$$\dot{z}_o(t)=N_oz_o(t) + N_{od}z_o(t-\tau_x) +B_oZ_{\tau_u}z_o(t-\tau_x),$$whose spectrum is given by$$\sigma\Big(N_o+N_{od}e^{-\lambda \tau_x} + B_oZ_{\tau_u}e^{-\lambda \tau_x}\Big).$$

Subject to the satisfaction of the rank condition (\ref{d7}), the following matrix equations are satisfied:
\begin{align}
	\label{d14}
	N_oF_o-F_oA&=\bf 0,\\
	\label{d15} N_{od}F_o-F_{o}A_d&=\bf 0.\end{align}

Based on (\ref{d14})-({\ref{d15}), it is easy to show that the following holds:
\begin{align}
	\label{d16}
	\bar{N}_oF_o-F_o\bar{A}=\bf 0,\end{align}
where $\bar{N}_o:=N_o+N_{od}e^{-\lambda \tau_x} + B_oZ_{\tau_u}e^{-\lambda \tau_x}$ and
$\bar{A}:=A +A_de^{-\lambda \tau_x} + BZ_{\tau_u}F_oe^{-\lambda \tau_x}$.

Since $F_o\neq \bf 0$ is full row rank, from Lemma \ref{lemma2p5}, it follows that \[
\sigma\left( \bar{N}_o \right)
\subseteq
\sigma\left( \bar{A} \right).
\]

This finding establishes that the spectrum of the projected closed-loop system is preserved within that of the full-order closed-loop system, providing a valid framework for designing controllers using reduced-order dynamics. When using the control law \eqref{d5} to stabilize \eqref{d10}, similar spectral properties can be established; this derivation is omitted here for brevity.

Let us pause here for an illustrative example.

\textit{Example 1:} Consider system (\ref{d1}) with time delays $\tau_u = \tau_x = 0.5\text{s}$ and the following system matrices:

$A=\begin{pmatrix} 
	1 &0.5 &-1 &0 &1\\0.3 &0.5 &-0.6 &-0.3 &0.3\\-0.6 &0 &0.2 &0.6 &-0.6\\1.25 &0.5 &-1 &-0.25 &1.75\\-0.75 &0 &0 &0.75 &-0.25 \end{pmatrix}$,

$A_d=\begin{pmatrix} 2.5 &1.25 &-2.5 &0 &2.5\\
0.75 & 1.25 &-1.5 &-0.75 &0.75\\
-1.5 &0 &0.5 &1.5 &-1.5\\
3.125 &1.25 &-2.5 &-0.625 &4.375\\
-1.875 &0 &0 &1.875 &-0.625\end{pmatrix}$,
 
$B=\begin{pmatrix}1 &-1\\1 &1\\0 &0\\1 &0\\0 &1 \end{pmatrix}$.

Our objective is to design a controller that stabilizes the target output vector (\ref{d3}), where
$$F_o=\begin{pmatrix}0.5 &1 &-2 &0.5 &2.5\\0.75 &1 &-2 &0.25 &2.25 \end{pmatrix}.$$

Simulation easily verifies that the target output vector is unstable. To address this, we design a controller of the form (\ref{d13}) based on the reduced-order system (\ref{d10}) using the following procedure.

First, we verify the rank condition (\ref{d7}) and find that it is satisfied. Accordingly, we obtain the reduced-order time-delay system (\ref{d10}), where

$N_o:=F_oAF^-_o=\begin{pmatrix}0 &1\\-0.5 &1.5\end{pmatrix}$,  $N_{od}:=F_oA_dF^-_o=\begin{pmatrix}0 &2.5\\-1.25 &3.75\end{pmatrix}$, and $B_o:=F_oB=\begin{pmatrix}2 &3\\2 &2.5\end{pmatrix}$.

Next, we design a controller (\ref{d13}) to stabilize the reduced-order system. For the given $\tau_u=0.5\text{s}$, the LMI in Lemma 11 \cite{trinhnam26} is feasible for $\lambda=1$, and we obtain
\[Z_{\tau_u}=\begin{pmatrix} 5.2354  & -7.7006\\-3.7586 &4.0337 \end{pmatrix}.\]
Consequently, the target output controller 
\[u(t-0.5)=Z_{\tau_u}z_o(t-0.5),\]
guarantees the asymptotic stability of the following closed-loop target output vector
\begin{align}
	\dot{z}_o(t)&=\begin{pmatrix}0 &1\\-0.5 &1.5\end{pmatrix}z_o(t)\nonumber\\&+\begin{pmatrix}-0.8050  & -0.7999\\
		-0.1757  & -1.5668 \end{pmatrix}z_o(t-0.5).\nonumber\end{align}
For the above time-delay system, the eigenvalues located to the right of the vertical line $\alpha = -4$ are calculated as $\{-0.4537\pm j1.5519, -0.4646, -3.6765\}$ using the approach described in \cite{wu}.

Now, by applying the same target output controller to the full-order system (\ref{d1}), we obtain the closed-loop system
\begin{align}
	\label{d17}
	\dot{x}(t)=Ax(t)+(A_d+BZ_{\tau_u}F_o)x(t-0.5).\end{align}
The eigenvalues of the above closed-loop system satisfying $\operatorname{Re}(s) > -4$ are evaluated as
\\
\\
$\{1.1896,\, 0.5751,\, -0.6112 \pm j3.3712\} \cup \{-0.4537 \pm j1.5519,\, -0.4646,\, -3.6765\}$. 
\\
\\
This resulting set contains the four stable eigenvalues belonging to the reduced-order closed-loop system.

Figure~\ref{fig1paper6} displays the trajectories of $z_{o1}(t)$ and $z_{o2}(t)$, which clearly converge to $0$ as $t \rightarrow \infty$. Although the closed-loop system (\ref{d17}) is inherently unstable, the target output vector $z_o(t)$ is successfully stabilized by the delayed target output controller $u(t-0.5)=Z_{\tau_u}z_o(t-0.5)$.
\begin{figure}[!h]
	\centering
	\includegraphics[width=0.9\linewidth]{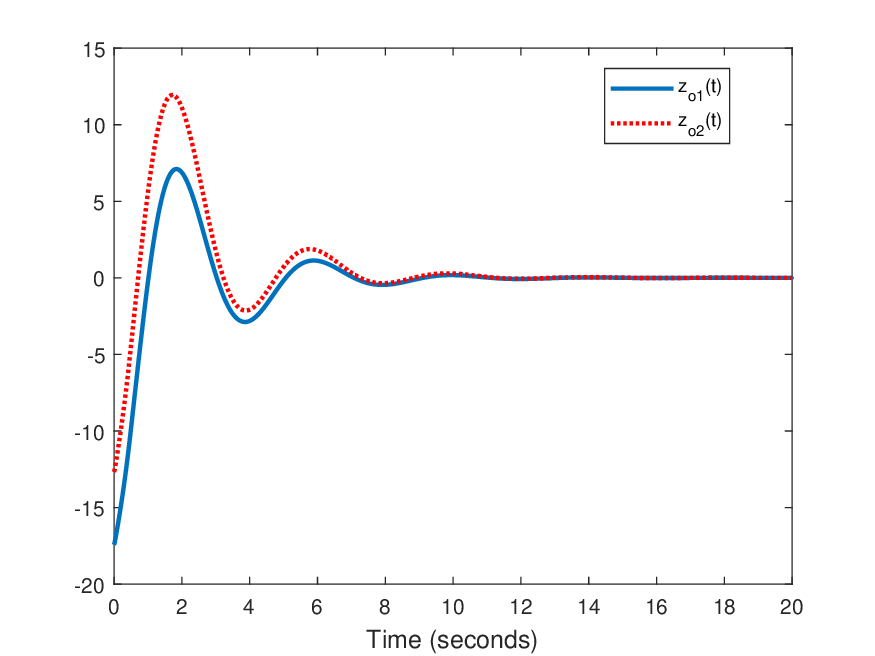}
	\caption{Trajectories of $z_{o1}(t)$ and $z_{o2}(t)$ under the proposed target output control law $u(t-0.5)=Z_{\tau_u}z_o(t-0.5)$ }
	\label{fig1paper6}
\end{figure}

On the other hand, for the case $\tau_x>\tau_u$, say, $\tau_x=1\text{s}$ and $\tau_u=0.5\text{s}$, we can employ the control law (\ref{d5}), i.e.,
$$u(t-0.5) = Z_{\tau_u}z_o(t-0.5) + Z_{\tau_x}z_o(t-1)$$
to stabilize the reduced-order system (\ref{d10}). Thus, for the given $\tau_u=0.5\text{s}$ and $\tau_x=1\text{s}$, the LMI in Lemma 13 \cite{trinhnam26} is feasible for $\lambda=1$, and we obtain
\[Z_{\tau_u}=\begin{pmatrix} 1.3280  & -2.5848\\-1.1172 &1.4174 \end{pmatrix},\]
\[Z_{\tau_x}=\begin{pmatrix} 4.2524  & -5.1863\\
	-2.8542 &   2.6962 \end{pmatrix}.\]
Note that this control law is specifically designed to accommodate a larger time delay $\tau_x$ and ensure a more robustly stable closed-loop system (see \cite{trinhnam26}). Applying this control law to the full-order system (\ref{d1}) yields the following closed-loop system
\begin{align}
	\dot{x}(t)&=Ax(t)+BZ_{\tau_u}F_ox(t-0.5)\nonumber\\&+(A_d+BZ_{\tau_x}F_o)x(t-1).\nonumber \end{align}
As shown in Figure~\ref{fig2paper6}, the trajectories of $z_{o1}(t)$ and $z_{o2}(t)$ converge to zero as $t \rightarrow \infty$, demonstrating a faster convergence rate than those in Figure~\ref{fig1paper6}.
\begin{figure}[!h]
	\centering
	\includegraphics[width=0.9\linewidth]{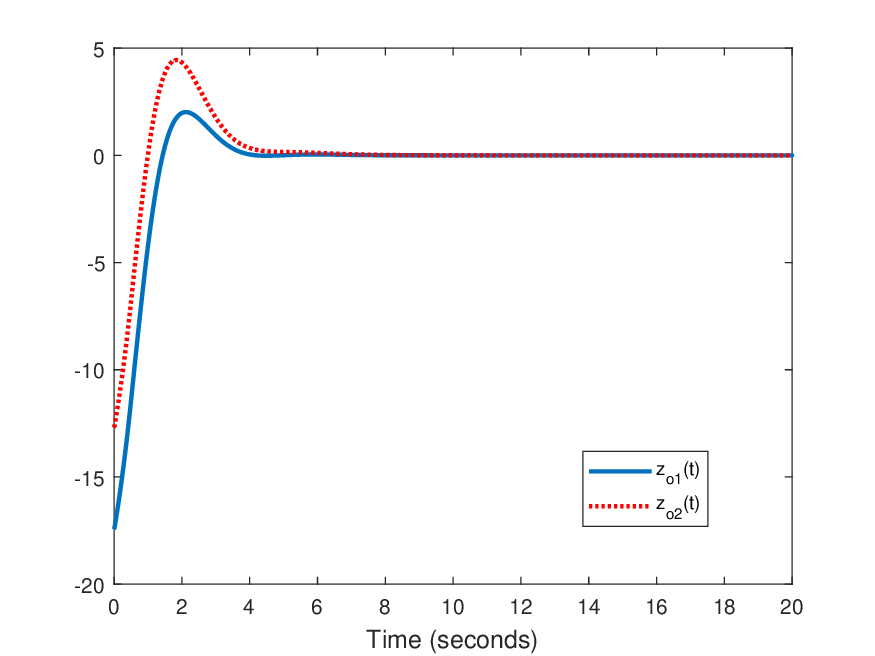}
	\caption{Trajectories of $z_{o1}(t)$ and $z_{o2}(t)$ under the proposed target output control law $u(t-0.5) = Z_{\tau_u}z_o(t-0.5) + Z_{\tau_x}z_o(t-1)$ }
	\label{fig2paper6}
\end{figure}

\textit{Remark 1 (Rank Condition Fulfillment via Augmentation):} When the algebraic rank condition \eqref{d7} does not hold, we can construct an augmented functional matrix $\bar{F}_o$ of the form
$$ \bar{F}_o = \begin{pmatrix} F_o \\ R \end{pmatrix} $$
by choosing an auxiliary matrix $R \in \mathbb{R}^{(q-m) \times n}$, where $m < q \leq n$, such that $\bar{F}_o$ satisfies the rank condition when substituted into \eqref{d7}. Consequently, the controller design can proceed based on an extended $q$-dimensional subsystem of the form
\begin{align}\label{d18}\dot{\bar{z}}_o(t) = \bar{N}_o\bar{z}_o(t) + \bar{N}_{od}\bar{z}_o(t-\tau_x) + \bar{B}_ou(t-\tau_u),\end{align}
where the augmented target output vector is defined as$$\bar{z}_o(t) = \begin{pmatrix} z_o(t) \\ z_{a}(t) \end{pmatrix} = \begin{pmatrix} F_o \\ R \end{pmatrix} x(t) = \bar{F}_ox(t),$$and the corresponding system matrices are given by $\bar{N}_o := \bar{F}_oA\bar{F}^-_o$, $\bar{N}_{od} := \bar{F}_oA_d\bar{F}^-_o$, and $\bar{B}_o := \bar{F}_oB$.

By shifting the control objective to regulating the extended target functional $\bar{z}_o(t)$, this relaxation bypasses the structural rank constraint. In essence, the designed controller now contains more columns in its gain matrices, corresponding to the larger dimension $q$, to overcome an otherwise restrictive algebraic limitation.

Guided by the framework in \cite{Fernando2025}, we construct a matrix $R \in \mathbb{R}^{(q-m) \times n}$ to satisfy the rank condition:
\begin{align}\label{d19}\mathrm{rank}\begin{pmatrix}
	\bar{F}_oA\\\bar{F}_o
\end{pmatrix} =\mathrm{rank}\begin{pmatrix}
	\bar{F}_oA_d\\\bar{F}_o
\end{pmatrix}=\mathrm{rank}(\bar{F}_o).\end{align}

We define the augmented observability matrix as
\begin{equation}
	\label{d20}	\mathcal{O}_{(A,A_d,F_o)}:=	\begin{pmatrix}
		F_o \\ F_oA \\ \vdots \\ F_oA^{n-1} \\
		F_oA_d \\ \vdots \\ F_oA_d^{n-1}
	\end{pmatrix},
\end{equation}
with its rank denoted by
$$q := \mathrm{rank}\left(\mathcal{O}_{(A,A_d,F_o)}\right).$$

From above, we can construct a new full-row-rank matrix, denoted as $\bar{F}_o$, by selecting $q$ linearly independent rows from $\mathcal{O}_{(A,A_d,F_o)}$ such that the rows of $F_o$ are preserved, i.e.,
\begin{equation*}
	\bar{F}_o =
	\begin{pmatrix} F_o \\ R \end{pmatrix}
	\in \mathbb{R}^{q\times n},
\end{equation*}
where $R$ is the $(q - m)\times n$ matrix created by basis rows of $ \{F_oA, F_oA^2, \ldots, F_oA^{n-1},\allowbreak F_oA_d, F_0 A_d^2, \ldots\, F_oA_d^{n-1}\}.$

By the Cayley--Hamilton theorem,
both $F_oA^n$ and $F_oA_d^n$ are linear combinations of
$F_o, F_oA, \ldots, F_oA^{n-1}$ and $F_o, F_oA_d, \ldots, F_oA_d^{n-1}$, respectively.
Hence
\begin{equation*}
	\mathrm{row}(\bar{F}_oA) \subseteq \mathrm{row}(\mathcal{O}_{(A,A_d,F_o)})
	= \mathrm{row}(\bar{F}_o),
\end{equation*}
\begin{equation*}
	\mathrm{row}(\bar{F}_oA_d) \subseteq \mathrm{row}(\mathcal{O}_{(A,A_d,F_o)})
	= \mathrm{row}(\bar{F}_o),
\end{equation*}
and therefore the rank condition (\ref{d19}) is satisfied.

To illustrate Remark 1, let us consider the following example.

\textit{Example 2:}  Consider the system (\ref{d1}) with time delays $\tau_u =0.5\text{s}$, $\tau_x = 1\text{s}$ defined by the following matrices:

$A=\begin{pmatrix} 0  &   1 &    1\\
	0  &   -1  &   2\\
	0  &   -3  &   2 \end{pmatrix}$, $A_d=\begin{pmatrix} 1  &   -1 &    2\\
	0  &   -2  &   1\\
	0  &   -1  &   -2 \end{pmatrix}$,

 $B=\begin{pmatrix}1 &2\\3 &4\\5 &6\end{pmatrix}$,  and $F_o=\begin{pmatrix} 0  & 1  & 1 \end{pmatrix}$.

For the given target matrix $F_o$, it can be verified that the rank condition (\ref{d7}) is not satisfied because
\[\mathrm{rank}\begin{pmatrix}
	F_oA\\F_o
\end{pmatrix} =\mathrm{rank}\begin{pmatrix}
	F_oA_d\\F_o
\end{pmatrix}=2\]
while $\mathrm{rank}(
	F_o)=1$. 
To construct the matrix $R$, we first compute the augmented observability matrix
\begin{equation*}	\mathcal{O}_{(A,A_d,F_o)}=	\begin{pmatrix}
		0 &1&1\\0 &-4 &4\\0 &-8 &0\\0 &-3 &-1\\0 &7 &-1
	\end{pmatrix},
\end{equation*}
yielding $q=2$. Therefore, we can select $R$ as
$$R=\begin{pmatrix}
	0 &-4&4\end{pmatrix},$$
which yields
$$\bar{F}_o=\begin{pmatrix}
	0& 1 &1\\0 &-4&4\end{pmatrix}.$$
It can now be verified that the rank condition (\ref{d19}) is satisfied, yielding an extended second-order subsystem of the form (\ref{d18}) with the following matrices:

$\bar{N}_o=\begin{pmatrix}0 &1\\-4 &1\end{pmatrix}$, $\bar{N}_{od}=\begin{pmatrix}-2 &0.25\\-4 &-2\end{pmatrix}$, and $\bar{B}_o=\begin{pmatrix}8 &10\\8 &8\end{pmatrix}$.

Now, to stabilize the extended target output vector, a simpler delayed control law is initially considered. Specifically, we seek to synchronize the delays in both the input and state vectors (cf. \cite{trinhnn26}). Since $\tau_u<\tau_x$, this synchronization is realized by deliberately introducing a delay of $\tau_x-\tau_u$ into the control input, yielding the following control law:$$u(t-\tau_x)=\bar{Z}_{\tau_x}\bar{z}_o(t-\tau_x),$$where $\bar{Z}_{\tau_x} \in \mathbb{R}^{2 \times 2}$ is a control gain to be determined. However, for a delay of $\tau_x=1\text{s}$, the LMI formulation in Lemma 11 of \cite{trinhnam26} is not feasible. This restriction indicates that the above baseline control law (via the use of Lemma 11 in \cite{trinhnam26}) cannot guarantee the stability of the target output vector, thereby motivating the use of a more generalized delayed control law of the form (\ref{d5}):
\[u(t-\tau_u) = \bar{Z}_{\tau_u}\bar{z}_o(t-\tau_u) + \bar{Z}_{\tau_x}\bar{z}_o(t-\tau_x)\]
where $\tau_u=0.5\text{s}$ and $\tau_x=1\text{s}$. The control gains $\bar{Z}_{\tau_u}, \bar{Z}_{\tau_x} \in \mathbb{R}^{2 \times 2}$ are then designed to ensure that the following closed-loop augmented target output system is asymptotically stable:
\begin{align*}\dot{\bar{z}}_o(t) &= \bar{N}_o\bar{z}_o(t) +\bar{B}_o\bar{Z}_{\tau_u}\bar{z}_o(t-\tau_u)\\&+ (\bar{N}_{od}+\bar{B}_o\bar{Z}_{\tau_x})\bar{z}_o(t-\tau_x) .\end{align*}

For the given $\tau_u=0.5\text{s}$ and $\tau_x=1\text{s}$, the LMI in Lemma 13 \cite{trinhnam26} is feasible for $\lambda=1$, and we obtain
\[\bar{Z}_{\tau_u}=\begin{pmatrix} 0.6958 &-0.1612\\-0.5708 &0.0536 \end{pmatrix},\]
\[\bar{Z}_{\tau_x}=\begin{pmatrix} 1.6238 &1.2896\\-1.1135 &-1.0542 \end{pmatrix}.\]

The designed control law is subsequently applied to the full-order system. As illustrated in Figure~\ref{fig3paper6}, the trajectories of $z_{o}(t)$ and $z_a(t)$ asymptotically converge to zero, successfully validating Remark 1.
\begin{figure}[!h]
	\centering
	\includegraphics[width=0.9\linewidth]{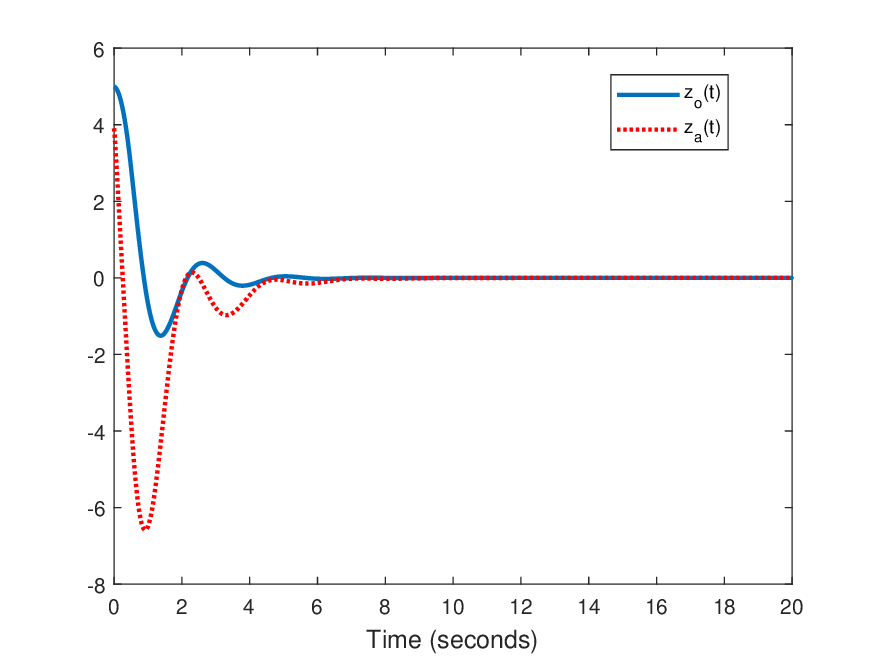}
	\caption{Trajectories of $z_{o}(t)$ and $z_{a}(t)$ under the proposed target output control law $u(t-0.5) = \bar{Z}_{\tau_u}\bar{z}_o(t-0.5) + \bar{Z}_{\tau_x}\bar{z}_o(t-1)$ }
	\label{fig3paper6}
\end{figure}

\textit{Remark 2:} To handle the case where $\tau_u > \tau_x$, the proposed control law can take either of the following forms:
\begin{align}
	u(t) &= Z_{\tau_u}z_o(t),\\ 
	u(t) &= Z_{\tau_u}z_o(t) + Z_hz_o(t - h), \ h>0.
\end{align}
Consequently, under the input delay $\tau_u$, the actual control signals entering the system become:
\begin{align}
	\label{control1}
	u(t-\tau_u) &= Z_{\tau_u}z_o(t-\tau_u),\\
	\label{control2}
	u(t-\tau_u) &= Z_{\tau_u}z_o(t-\tau_u)+ Z_hz_o(t-\alpha), \end{align}
where $\alpha=\tau_u+h$, and $Z_{\tau_u}, Z_h \in \mathbb{R}^{r \times m}$ are the control gains to be designed such that the target output vector $z_o(t)\to \bf 0$ asymptotically.

Substituting (\ref{control1}) and (\ref{control2}) into (\ref{d10}) yields the following closed-loop time-delay systems:
\begin{align}\label{d25}\dot{z}_o(t)&= N_oz_o(t) +N_{od}z_o(t-\tau_x)+B_oZ_{\tau_u}z_o(t-\tau_u),\\
\label{d26}\dot{z}_o(t)&= N_oz_o(t) +N_{od}z_o(t-\tau_x)+B_oZ_{\tau_u}z_o(t-\tau_u)\nonumber\\&\quad +B_oZ_hz_o(t-\alpha).\end{align}
Thus, the objective is to determine the control gains $Z_{\tau_u}$ and $Z_h$ to guarantee the asymptotic stability of these respective closed-loop systems.

Note that the stabilization of \eqref{d25} is more restrictive than that of \eqref{d11}. Nevertheless, given the delays $\tau_u$ and $\tau_x$, Lemma 13 in \cite{trinhnam26} remains applicable for determining the gain matrix $Z_{\tau_u}$. In contrast, the stabilization of \eqref{d26} involves three distinct mismatched time delays satisfying $\tau_x<\tau_u<\alpha$. For this scenario, the control gains $Z_{\tau_u}$ and $Z_h$ can be determined via Lemma 14 in \cite{trinhnam26} subject to the feasibility of its formulated LMI. As illustrated in Example A8 of \cite{trinhnam26}, stabilizing \eqref{d26} is less restrictive than stabilizing \eqref{d25}, highlighting a clear engineering trade-off.

\section{Observer-Based Target Output Control Using Lower-Order Time-Delay Subsystems}
In Section II, we demonstrated that the target output vector $z_o(t)$ can be effectively stabilized using a delayed control law, designed based on either an $m$- or $q$-dimensional subsystem of the form (\ref{d10}) or (\ref{d18}). In practice, however, neither $z_o(t)$ nor its delayed state $z_o(t-\tau_u)$ is directly available for feedback. Furthermore, the measured output vector is subject to the time delay modeled in (\ref{d2}). To overcome these limitations, the designed controller is implemented via a functional observer.

According to \cite{trinh5}, the delayed control law (\ref{d5}) is estimated by estimating the following functional
\begin{align}
	\label{d27}
	z(t)&:= Z_{\tau_u}z_o(t-\tau_u) + Z_{\tau_x}z_o(t-\tau_x)\nonumber\\&=Z_{\tau_u}F_ox(t-\tau_u) + Z_{\tau_x}F_ox(t-\tau_x) \end{align}
based on the full-order system (\ref{d1})-(\ref{d2}). As recently discussed in \cite{trinh5} for the case without state time-delays (i.e., $A_d=\mathbf{0}$), the estimation of (\ref{d27}) can be achieved using a dual-observer architecture. This dual-observer scheme can be generalized to the class of time-delay systems (\ref{d1})-(\ref{d2}) with mismatched time delays. In the following, we restrict our attention to the case where $\tau_u < \tau_y < \tau_x$. The analysis of the alternative ordering, $\tau_y > \tau_x > \tau_u$, follows a similar trajectory and may be pursued using foundational insights available in recent literature \cite{trinh5, trinhnam26, trinhnn26,  trinh2026existence, trinh1, trinh2}).

Without loss of generality, we assume that $Z_{\tau_u}F_o$ and $Z_{\tau_y}F_o$ have full row rank. We then rewrite \eqref{d27} as follows:
\begin{align}
	\label{c6}
	z(t)=z_1(t)+z_2(t),
\end{align}
where
\begin{align*}
	z_1(t) &:=Z_{\tau_u}F_ox(t-\tau_u)=F_ux(t-\tau_u),\\
	 \quad z_2(t) &:=Z_{\tau_x}F_ox(t-\tau_x)= F_xx(t-\tau_x).
\end{align*}
As in \cite{trinh5}, we employ a dual-observer scheme where Observer 1 estimates $z_1(t)$ and Observer 2 estimates $z_2(t)$, yields the overall estimate:
\begin{align*}
	\hat{z}(t)=\hat{z}_1(t)+\hat{z}_2(t).
\end{align*}

The dual-observer architecture enables stable control under severe latency by splitting the estimation task into two reduced-order, parallel estimators, eliminating the need for full-state reconstruction. Because the second functional may partially align with the spaces spanned by the output matrix $C$ and the matrix $F_u$, matrix decomposition techniques can be applied. This allows Observer 2 to feature a structurally simpler design, thereby significantly reducing computational complexity.

Following the framework in \cite{trinh2}, we consider the following novel observer to asymptotically estimate $z_1(t)$. This structure allows us to account for mismatched time delays $\tau_u$, $\tau_y$ and $\tau_x$, while achieving a more robust and less conservative stability for the error time-delay system across larger ranges of $\tau_u$ and $\tau_x$.
\begin{align}
	&\hat{z}_1(t)=w_1(t)+M_1y(t)+M_{1\beta}y(t-\beta),\label{d29}\\
	&\dot{w}_1(t)=N_1w_1(t)+N_{1\tau}w_1(t-\tau)+N_{1\tau_x}w_1(t-\tau_x)\nonumber\\&+G_1y(t)+G_2y(t-\beta)+G_3y(t-\tau)+G_4y(t-\tau_x)\nonumber\\&+G_5y(t-\beta -\tau_x)+Ju(t-2\tau_u)+J_1u(t-\tau_u-\tau_y)\nonumber\\&+J_2u(t-\tau_u-\tau_y-\beta),\label{d30}
\end{align}
where $\tau=\tau_y-\tau_u$, $\beta=\tau_u+\tau_x-\tau_y$, and the initial condition is given by $w_1(\theta) = \rho(\theta)$ for $\theta \in [-\tau_x, 0]$, and $\hat{z}_1(t)$ is the estimate of $z_1(t)$. Matrices $M_1$, $M_{1\beta}$, $N_1$, $N_{1\tau}$, $N_{1\tau_x}$, $G_i$ ($i=1,2,\cdots,5$), $J$, $J_1$ $J_2$ are to be determined such that $\hat{z}(t)$ converges asymptotically to $z(t)$. 

We next derive the existence conditions for the observer (\ref{d29})-(\ref{d30}). To this end, we define the estimation error vector as $e(t)=\hat{z}_1(t)-z_1(t)$. Given $\tau$ and $\beta$ as defined above, the following time-delay relationships hold:
\begin{align*}x(t-\tau_y)&=x(t-\tau-\tau_u),\\
	x(t-\tau_u-\tau_x)&=x(t-\beta-\tau_y),\\
	x(t-\tau_y-\tau_x)&=x(t-\tau_y-\beta-\tau).
	\end{align*}
Consequently, the resulting error dynamics can be derived and expressed as
\begin{align}
	\label{d31}
	&\dot{e}(t)=\dot{w}_1(t)+M_1\dot{y}(t)+M_{1\beta}\dot{y}(t-\beta)-F_u\dot{x}(t-\tau_u)\nonumber\\  &=N_1e(t)+N_{1\tau}e(t-\tau)+N_{1\tau_x}e(t-\tau_x)+\mathcal{C}_{1}u(t-2\tau_u)\nonumber\\  &+\mathcal{C}_{2}u(t-\tau_u-\tau_y) +\mathcal{C}_{3}u(t-\tau_u-\tau_y-\beta)+\mathcal{C}_{4}x(t-\tau_u) \nonumber\\  &+\mathcal{C}_{5}x(t-\tau_y)+\mathcal{C}_{6}x(t-\tau_u-\tau_x)+\mathcal{C}_{7}x(t-\tau_y-\tau_x)\nonumber\\ &+\mathcal{C}_{8}x(t-\tau_y-\beta-\tau_x)+\mathcal{C}_{9}x(t-\tau-\tau_y),
\end{align}
where\\ 
\\
$\mathcal{C}_1=J-F_uB$, $\mathcal{C}_2=J_1+M_1CB$, $\mathcal{C}_3=J_2+M_{1\beta}CB$,  $\mathcal{C}_4=N_1F_u-F_uA$, 
$\mathcal{C}_5=N_{1\tau}F_u+\bar{G}_1C+M_1CA$,  $\mathcal{C}_6=N_{1\tau_x}F_u+\bar{G}_2C+M_{1\beta}CA-F_uA_d$, $\mathcal{C}_7=\bar{G}_4C+M_1CA_d$, $\mathcal{C}_{8}=\bar{G}_5C+M_{1\beta}CA_d$, $\mathcal{C}_{9}=\bar{G}_3C$,\\
\\
$\bar{G}_1:=G_1-N_1M_1$,  $\bar{G}_2:=G_2-N_1M_{1\beta}$,  $\bar{G}_3:=G_3-N_{1\tau}M_1$,  $\bar{G}_4:=G_4-N_{1\tau_x}M_1-N_{1\tau}M_{1\beta}$, $\bar{G}_5:=G_5-N_{1\tau_x}M_{1\beta}$ \ and \
$\mathcal C
:=
\begin{pmatrix}
	\mathcal C_1 &
	\mathcal C_2 &
	& \cdots &
	\mathcal C_{8} &
	\mathcal C_{9}
\end{pmatrix}.$

The following theorem provides conditions for the existence of observer (\ref{d29})-(\ref{d30}).

\begin{thm}\label{thm:1p7}
Observer (\ref{d29})-(\ref{d30}) provides asymptotic estimation of the functional $z_1(t)=F_ux(t-\tau_u)$ and yields
estimation error dynamics that are decoupled from the plant state $x(\cdot)$ and the input $u(\cdot)$
if $\mathcal C=\bf 0$ and the following delay-dependent error dynamics
\begin{align}
	\label{d32}&\dot{e}(t)=N_1e(t)+N_{1\tau}e(t-\tau)+N_{1\tau_x}e(t-\tau_x)
\end{align}
is asymptotically stable. In this case, the estimation error satisfies
\[
e(t)=\hat z_1(t)-z_1(t)\to {\bf 0} \quad \text{as} \quad t\to\infty
\]
for all admissible initial conditions and inputs $u(\cdot)$.
\end{thm}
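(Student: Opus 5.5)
The argument will rest on the error expression \eqref{d31}, which is exact. The plan is therefore to check that this identity is correct, then impose $\mathcal C=\mathbf 0$, and finally appeal to stability of the linear delay system \eqref{d32}.

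\textbf{Step 1: verify \eqref{d31}.} Differentiate $e(t)=w_1(t)+M_1y(t)+M_{1\beta}y(t-\beta)-F_ux(t-\tau_u)$. Use $\dot y(t)=C\dot x(t-\tau_y)$ and substitute \eqref{d1}, evaluated at the shifted arguments $t-\tau_u$, $t-\tau_y$ and $t-\tau_y-\beta$, for each $\dot x$ term. In \eqref{d30}, replace $w_1(\cdot)$ by $e(\cdot)-M_1y(\cdot)-M_{1\beta}y(\cdot-\beta)+F_ux(\cdot-\tau_u)$ at the three arguments $t$, $t-\tau$ and $t-\tau_x$. Express every $y$ as $Cx(\cdot-\tau_y)$. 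Then use the delay identities listed before \eqref{d31}, namely $\tau_u+\tau=\tau_y$, $\tau_u+\tau_x=\beta+\tau_y$ and $\tau_y+\beta+\tau=\tau_y+\tau_x$, to merge coincident delayed arguments. Collecting the coefficients of each distinct delayed signal gives exactly $\mathcal C_1,\dots,\mathcal C_9$, with the $\bar G_i$ absorbing the $N$-times-$M$ cross terms. This step is pure bookkeeping; the only care needed is matching the delays correctly (for instance, the $u(t-\tau_u-\tau_y-\beta)$ term arises from $M_{1\beta}\dot y(t-\beta)$).

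\textbf{Step 2: decoupling.} If $\mathcal C=\mathbf 0$, every forcing term in \eqref{d31} involving $u(\cdot)$ or $x(\cdot)$ vanishes identically, for all $t\ge0$ and whatever the signals are. Hence $e$ satisfies the autonomous equation \eqref{d32}. Its initial function on $[-\tau_x,0]$ (note $\tau,\tau_x\le\tau_x$ since $\tau<\tau_x$ under $\tau_u<\tau_y<\tau_x$) is determined by $\rho$ and the plant's initial data, but the evolution itself is independent of the plant state and the input. This is the claimed decoupling.

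\textbf{Step 3: convergence.} Equation \eqref{d32} is a linear retarded system with commensurate-or-not constant delays. Asymptotic stability, meaning all roots of $\det(\lambda I-N_1-N_{1\tau}e^{-\lambda\tau}-N_{1\tau_x}e^{-\lambda\tau_x})=0$ lie in $\mathrm{Re}\,\lambda<0$, is equivalent for such systems to exponential stability, i.e.\ $\|e(t)\|\le Ke^{-\gamma t}\|e_0\|_{[-\tau_x,0]}$. So $e(t)\to\mathbf 0$ for every continuous initial function, and hence for all admissible initial conditions and inputs. This gives $\hat z_1(t)\to z_1(t)$.

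The main obstacle is only the careful term matching in Step 1, in particular the identifications $x(t-\tau_y)=x(t-\tau-\tau_u)$ that fold the terms arising from $N_{1\tau}F_ux(t-\tau-\tau_u)$ into $\mathcal C_5$. Everything else follows directly.
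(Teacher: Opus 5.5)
Your proposal is correct and follows the paper's proof. Setting $\mathcal C=\mathbf 0$ collapses \eqref{d31} to the autonomous system \eqref{d32}, whose asymptotic stability gives $e(t)\to\mathbf 0$; your Step~1 re-derives \eqref{d31} correctly (the paper obtains it before the theorem), and Step~3 spells out the standard spectral-implies-exponential-stability fact the paper uses implicitly. One minor point: \eqref{d31} holds only once every shifted argument of $\dot x$ is nonnegative, i.e.\ for $t\ge\tau_u+\tau_x$ rather than for all $t\ge 0$, so you should restart \eqref{d32} at that time with the current segment of $e$ as the initial function; this does not change your conclusion.
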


\begin{proof}	If $\mathcal C=\bf 0$, then \eqref{d31} reduces to \eqref{d32}, so the error dynamics are
	decoupled from $x(\cdot)$ and $u(\cdot)$. If, in addition, \eqref{d32} is asymptotically stable, then $e(t)\to \bf 0$ as $t\to\infty$ for all admissible initial conditions and inputs $u(\cdot)$. This completes the proof.
\end{proof}

\textit{Remark 3:} By explicitly incorporating the net time-delay $\tau = \tau_y - \tau_u$, observer (\ref{d29})-(\ref{d30}) yields a less conservative stability condition than the error system in \cite{trinh2} (see equation (9) therein). It also reduces structural complexity compared to observer (6)-(7) in \cite{trinh2}, representing a clear improvement. For brevity, we omit the corresponding observer structures and existence conditions for the cases $\tau_y \leq \tau_u$ or $\tau_y = \tau_x$, as well as the derivation of the observer gains to satisfy Theorem 1, all of which follow a design procedure similar to that in \cite{trinh2}.

With the observer for $z_1(t)$ established, we now design a second observer to estimate the functional $z_2(t)=F_xx(t-\tau_x)$. Because $\tau_u<\tau_y < \tau_x$ and $F_x$ may partially reside within the subspaces spanned by the matrices $C$ and $F_u$, this second observer benefits from less conservative existence conditions and a simpler structure, significantly reducing computational overhead. These advantages, originally highlighted by Trinh \cite{trinh5}, motivate the dual-observer architecture proposed in this work to estimate $z(t)$ (note that an alternative approach for direct estimation of $z(t)$ was recently proposed in \cite{trinhvan26}).

Further details on this architecture are available in \cite{trinh5}, which explores the case where a subset of the rows in $F_x$ lies within the row space of $\begin{pmatrix} F_u \\ C \end{pmatrix}$. Under this geometric condition, $F_x$ can be decomposed using a transformation matrix $K$ and a residual matrix $\bar{F}_x$. The rows of $\bar{F}_x$ are orthogonal to the row spaces of $C$ and $F_u$ via orthogonal projection, ensuring linear independence. The full derivation of this decomposition is omitted here for brevity. Consequently, the designer only needs to construct a functional observer for the remaining, lower-dimensional sub-functional:
$$\bar{z}_2(t)=\bar{F}_xx(t-\tau_x).$$
This presents a significant architectural advantage, as it reduces the observer's computational complexity while maintaining full estimation integrity.

\textit{Example 3:}  Let us reconsider Example 2 under the condition that the output delay is $\tau_y = 0.8\text{s}$ and the output matrix $C$ is specified as
\[C=\begin{pmatrix} 0  &  0 &    1 \end{pmatrix}.\]
In Example 2, with the given delays $\tau_u=0.5\text{s}$ and $\tau_x=1\text{s}$, we designed a control law of the form
 \[u(t-0.5) = \bar{Z}_{\tau_u}\bar{z}_o(t-0.5) + \bar{Z}_{\tau_x}\bar{z}_o(t-1),\]
where
\[\bar{Z}_{\tau_u}=\begin{pmatrix} 0.6958 &-0.1612\\-0.5708 &0.0536 \end{pmatrix},\]
\[\bar{Z}_{\tau_x}=\begin{pmatrix} 1.6238 &1.2896\\-1.1135 &-1.0542 \end{pmatrix},\] to asymptotically stabilizing the augmented target output
 \begin{align*}\bar{z}_o(t)=\bar{F}_ox(t)=\begin{pmatrix}
 		0& 1 &1\\0 &-4&4\end{pmatrix}x(t) .\end{align*}

Because the only measured output is the state variable $x_3(t)$ delayed by $0.8\text{s}$, neither $\bar{z}_o(t)$ nor $\bar{z}_o(t - 0.5)$ is directly available for feedback control. Therefore, following the framework developed above, we now employ a dual-observer scheme to reconstruct the designed control law.

The first observer is constructed to estimate the following functional
\[z_1(t)=F_ux(t-0.5),\]
where $F_u=\bar{Z}_{\tau_u}\bar{F}_o=\begin{pmatrix}
	0 &1.3406 &0.0511\\0 &-0.7852 &-0.3563\end{pmatrix}$.

As mentioned in Remark 3, the observer gains required to satisfy Theorem 1 can be computed by following a design procedure similar to the one in \cite{trinh2}, yielding the following observer:
\begin{align*}
	\hat{z}_1(t)& = w_1(t),\\
	\dot{w}_1(t)&=\begin{pmatrix}  -6.2106 &-8.7008\\5.6066 &   7.2106 \end{pmatrix}w_1(t)\\&+\begin{pmatrix}6.0612 &10.3481\\-5.0014 &-8.5388\end{pmatrix}w_1(t-0.3)\\&+\begin{pmatrix}-4.3510 &-3.9487\\3.0239 &2.7088\end{pmatrix}w_1(t-1) \nonumber\\&+\begin{pmatrix} 3.3778\\-2.7872 \end{pmatrix}y(t)+\begin{pmatrix} 0.0536\\0.7383\end{pmatrix}y(t-0.7)\nonumber\\&+\begin{pmatrix} 4.2770 &5.6686\\-4.1372 &-5.2788 \end{pmatrix}u(t-1), \nonumber
\end{align*}
where
\[w_1(t)=\begin{pmatrix} w_{11}(t)\\
	w_{12}(t) \end{pmatrix}.\]
Finally, for the second functional $z_2(t)$, the fact that $F_x$ lies within the row space of $F_u$ (and $C$) allows us to directly reconstruct $z_2(t)$ from $\hat{z}_1(t)$ as follows: 
$$\hat{z}_2(t)=K_2\hat{z}_1(t-\gamma),$$
where
\[\gamma = \tau_x-\tau_u=0.5\text{s}, \ K_2=F_xF_u^-=\begin{pmatrix} -15.0485 &-21.1903\\12.0913 &16.6912\end{pmatrix}.\]

Integrating the designed controller with the dual-observer structure ensures the asymptotic stability of the target output $z_o(t)$. This is confirmed by the trajectories in Figure \ref{fig4paper5}, which show $z_o(t)$ converging asymptotically to zero.
\begin{figure}[!h]
	\centering
	\includegraphics[width=0.9\linewidth]{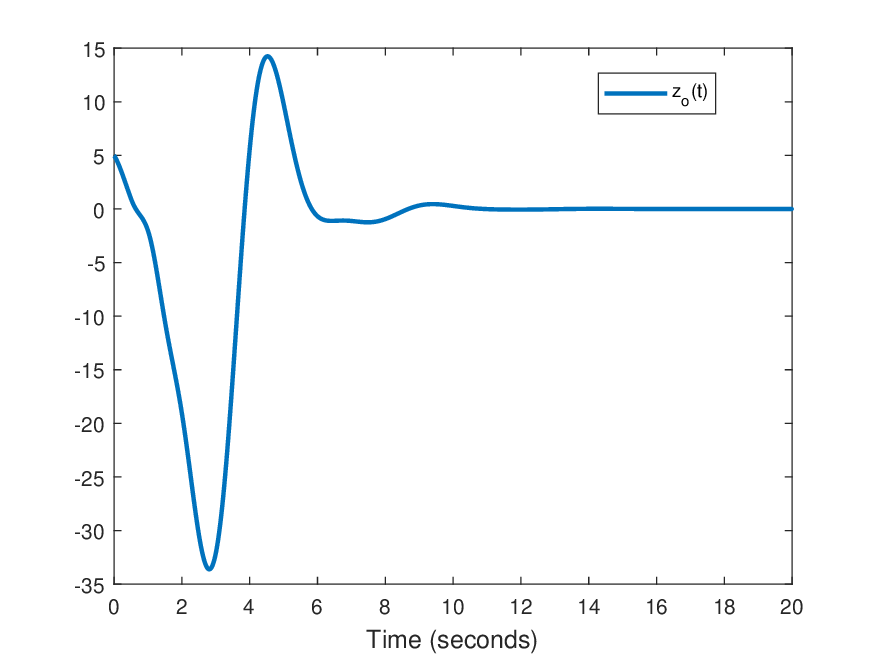}
	\caption{Trajectories of $z_o(t)$: Observer-based control}
	\label{fig4paper5}
\end{figure}

The remainder of this paper explores an alternative approach based on \cite{trinhnn26}. In this method, the delayed control law (\ref{d5}) is estimated using the reduced-order time-delay subsystem (\ref{d10}), whose output forms a subset of the overall output vector. For a more detailed discussion of this approach, see Section II-D of \cite{trinhnn26}.

\textit{Example 4:} Consider again the system in Example 1 with the time delays given by $\tau_u = 0.5\text{s}$, $\tau_x = 1\text{s}$, and $\tau_y = 0.8\text{s}$. The output matrix $C$ is chosen to match $F_o$, defined as
\[C=F_o=\begin{pmatrix}0.5 &1 &-2 &0.5 &2.5\\0.75 &1 &-2 &0.25 &2.25 \end{pmatrix}.\]

In Example 1, with the given delays  $\tau_u=0.5\text{s}$ and $\tau_x=1\text{s}$, we designed a control law of the form
$$u(t-0.5) = Z_{\tau_u}z_o(t-0.5) + Z_{\tau_x}z_o(t-1)$$
where
\[Z_{\tau_u}=\begin{pmatrix} 1.3280  & -2.5848\\-1.1172 &1.4174 \end{pmatrix},\]
\[Z_{\tau_x}=\begin{pmatrix} 4.2524  & -5.1863\\
	-2.8542 &   2.6962 \end{pmatrix}\]
to asymptotically stabilizing the target $z_o(t)=F_ox(t)$.

Now, this example highlights an interesting scenario. Even though the output matrix $C$ is identical to $F_o$, the output measurement vector is delayed by $\tau_y$. Because $\tau_y > \tau_u$, the delayed target variable $z_o(t-0.5)$ cannot be directly inferred from the measurement $y(t)$, which is itself delayed as $y(t) = z_o(t-0.8)$. Consequently, an observer (or a time-delay compensator) is strictly required to implement the control law in this practical scenario.

Since $C=F_o$, we can now express the output vector of this example in the following form
\[y(t)=F_ox(t-0.8)=C_oz_o(t-0.8),\]
where $C_o=I_2$.
Thus, we obtain the following reduced-order subsystem:
\begin{align*}
	\dot{z}_o(t)&=N_oz_o(t) + N_{od}z_o(t-1) +B_ou(t-0.5), \\
	y(t)&=z_o(t-0.8),
\end{align*}
where $N_o=\begin{pmatrix}0 &1\\-0.5 &1.5\end{pmatrix}$,  $N_{od}=\begin{pmatrix}0 &2.5\\-1.25 &3.75\end{pmatrix}$, and $B_o=\begin{pmatrix}2 &3\\2 &2.5\end{pmatrix}$.

The objective now is to estimate the functional$$z_1(t)=Z_{\tau_u}z_o(t-0.5)$$utilizing the delayed output measurement vector $y(t)=z_o(t-0.8)$. To this end, by following the framework presented in \cite{trinh1} (see Remark 5 therein), we construct the following observer:
\begin{align*}
	\dot{\hat{z}}_1(t)&=N_1\hat{z}_1(t)+N_{1\tau}\hat{z}_1(t-\tau)+N_{1\tau_x}\hat{z}_1(t-\tau_x)\nonumber\\&+G_1y(t)+G_2y(t-\beta)+Ju(t-2\tau_u).
\end{align*}
The detailed derivation of the observer gains is omitted here for brevity (the reader may refer to Remark 5 and Corollary 1 in \cite{trinh1}). Accordingly, we obtain the following observer to asymptotically estimate $z_1(t)$:
\begin{align*}
	\dot{\hat{z}}_1(t)&=\begin{pmatrix}1.0106 &0.0445\\-0.1220 &0.4894 \end{pmatrix}\hat{z}_1(t)\\&+\begin{pmatrix}-1.3095 &0.0113\\0.1013 &-0.8099\end{pmatrix}\hat{z}_1(t-0.3)\\&+\begin{pmatrix}-0.1753 &-0.0081\\-0.0009 &-0.2151\end{pmatrix}\hat{z}_1(t-1)\\&+\begin{pmatrix} 1.7517 &-3.4008\\-1.0394 &1.4098 \end{pmatrix}y(t)\\&+\begin{pmatrix}3.4549 &-6.8147\\
-2.0108 &2.8248 \end{pmatrix}y(t)\\&+\begin{pmatrix}-2.5136 &-2.4779\\0.6005 &0.192 \end{pmatrix}u(t-1). \nonumber
\end{align*}

Regarding the second functional $z_2(t)=Z_{\tau_x}z_o(t-1)$, the fact that $y(t)=z_o(t-0.8)$ allows us to obtain $z_2(t)$ from $y(t)$ as follows: 
$$z_2(t)=Z_{\tau_x}y(t-0.2),$$

Finally, integrating the designed controller with the dual-observer structure
\[\hat{z}(t)=\hat{z}_1(t)+Z_{\tau_x}y(t-0.2)\]
ensures the asymptotic stability of the target output $z_o(t)$. This is confirmed by the trajectories in Figure \ref{fig5paper5}, which show $z_o(t)$ converging asymptotically to zero.
\begin{figure}[!h]
	\centering
	\includegraphics[width=0.9\linewidth]{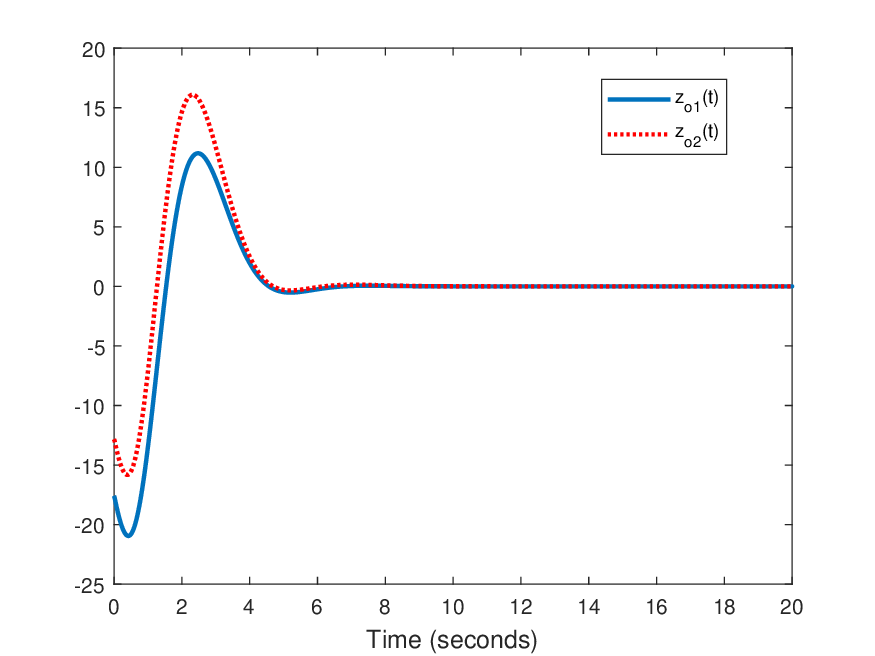}
	\caption{Trajectories of $z_{o_1}(t)$ and $z_{o_2}(t)$: Observer-based control}
	\label{fig5paper5}
\end{figure}

\section{Conclusion} 
Together, this paper and the companion preprints \cite{trinh5, trinhnam26, trinhnn26, trinh2026existence, trinh1, trinh2} constitute a multi-part study detailing distinct advancements within time-delay systems theory. Specifically, this body of work establishes systematic synthesis procedures for delayed functional observers and time-delay compensators, focusing on the mitigation of simultaneous, mismatched time delays across system states, control inputs, and output measurements.

Promising directions for future research include: (i) addressing intermittent measurement scenarios where delayed outputs are subject to temporary outages, necessitating a switched-observer approach to guarantee stability during data blackouts; (ii) extending these low-order delayed functional observers to handle time-varying mismatched delays $\tau_u(t)$, $\tau_y(t)$, and $\tau_x(t)$; (iii) adapting the compensator architecture to broader classes of nonlinear or uncertain systems; (iv) integrating Unknown Input Observer (UIO) principles into this multi-part framework to robustly handle external disturbances or faults; and (v) formulating an event-triggered delayed functional observer scheme based on this work to minimize network communication overhead. Further results along these lines will be reported in future works. Accompanying video demonstrations for these and upcoming preprints will be made available on the author's control Vlog channel.

\end{document}